\DeclarePairedDelimiter\ceil{\lceil}{\rceil}
\newtheorem{theorem}{Theorem}[section]
\newtheorem{lemma}[theorem]{Lemma}
\newtheorem{conjecture}[theorem]{Conjecture}
\newcommand{\E}{\mathbb{E}}
\newcommand{\PP}{\mathbb{P}}
\newcommand\hlight[1]{\tikz[overlay, remember picture,baseline=-\the\dimexpr\fontdimen22\textfont2\relax]\node[rectangle,fill=gray!50,rounded corners,fill opacity = 0.2,draw,thick,text opacity =1] {$#1$};} 
\title{A Proof of Entropy Minimization for Outputs in Deletion Channels via Hidden Word Statistics}
\author[1]{Arash Atashpendar \thanks{arash.atashpendar@uni.lu}}
\author[2]{David Mestel \thanks{david.mestel@cs.ox.ac.uk}}
\author[2]{A. W. Roscoe \thanks{bill.roscoe@cs.ox.ac.uk}}
\author[1]{Peter Y. A. Ryan \thanks{peter.ryan@uni.lu}}
\affil[1]{SnT, University of Luxembourg, Luxembourg}
\affil[2]{Department of Computer Science, University of Oxford, Oxford, UK}
\begin{document}

\maketitle

\begin{abstract}
From the output produced by a memoryless deletion channel from a uniformly random input of known length $n$, one obtains a posterior distribution on the channel input.  The difference between the Shannon entropy of this distribution and that of the uniform prior measures the amount of information about the channel input which is conveyed by the output of length $m$, and it is natural to ask for which outputs this is extremized. This question was posed in a previous work, where it was conjectured on the basis of experimental data that the entropy of the posterior is minimized and maximized by the constant strings $\texttt{000\ldots}$ and $\texttt{111\ldots}$ and the alternating strings $\texttt{0101\ldots}$ and $\texttt{1010\ldots}$ respectively.
In the present work we confirm the minimization conjecture in the asymptotic limit using results from \textit{hidden word statistics}.  We show how the analytic-combinatorial methods of Flajolet, Szpankowski and Vall\'ee for dealing with the \textit{hidden pattern matching} problem can be applied to resolve the case of fixed output length and $n\rightarrow\infty$, by obtaining estimates for the entropy in terms of the moments of the posterior distribution and establishing its minimization via a measure of autocorrelation.
\end{abstract}

\begin{IEEEkeywords}
Binary Subsequences, Information Entropy, Deletion Channel, Analytic Combinatorics, Hidden Word Statistics
\end{IEEEkeywords}

\section{Introduction}

This work was originally motivated by an analysis of prepare-and-measure based quantum key distribution (QKD) protocols \cite{ryan2013enhancements}, which suggested some simple changes aimed at reducing leakage of key material and improving the final key rate. These changes also included a modification of the quantum bit error rate (QBER) estimation that gave rise to an independent information theory problem that was studied in \cite{atashpendar2015information, atashpendar2018clustering}. We will abstract away from the details of the original context and simply state the problem as an analysis of entropy extremizing outputs in deletion channels.

More formally, the problem can be described as follows. A random bit string $y$ of length $n$ emitted from a memoryless source is transmitted via an i.i.d. deletion channel such that a shorter bit string $x$ of length $m$ ($m \le n$) is received as a subsequence of $y$, after having been subject to $n-m$ deletions. Consequently, the order in which the remaining bits are revealed is preserved, but the exact positions of the bits are not known. Given a subsequence $x$, the question is to find out how much information about $y$ is revealed. More specifically, the quantity that we are interested in is the conditional entropy \cite{cover2012elements} computed over the set of candidate supersequences upon observing $x$, i.e., $H(Y|X=x)$ where $Y$ is restricted to the set of compatible supersequences as explained below.

This information leakage is quantified as the drop in entropy \cite{shannon2001mathematical} for a fixed $x$ according to a weighted set of its compatible supersequences, referred to as the \textit{uncertainty set}. The uncertainty set, denoted by $\Upsilon_{n,x}$, contains all the supersequences that could have given rise to $x$ upon $n-m$ deletions. The weight distribution used in the computation of entropy is given by the number of occurrences or embeddings of a fixed subsequence in its compatible supersequences, i.e., the number of distinct ways $x$ can be extracted from $y$ upon a fixed number of deletions, denoted by $\omega_x(y)$.

The entropy extremization question was first investigated in \cite{atashpendar2015information} where it was conjectured on the basis of experimental data that the entropy conditioned on the observation of a fixed output is minimized and maximized by the uniform (\texttt{111...1}) and the alternating (\texttt{1010...}) bit strings, respectively. In a follow-up work \cite{atashpendar2018clustering}, in addition to studying a series of related combinatorial problems, the authors also provided an analysis of the same information theory problem proving the entropy minimization conjecture for the special cases of single and double deletions, i.e., $m=n-1$ and $m=n-2$. While the methodology used in \cite{atashpendar2018clustering} depended on showing that any bit string can be transformed into the uniform bit string by successively applying an operation that strictly decreases the entropy, here we adopt an entirely different approach based on some key theorems proven in the works of Flajolet, Szpankowski and Vall\'ee \cite{flajolet2006hidden} on \textit{hidden word statistics}. More precisely, we rely on the fact that the distribution of subsequence embeddings asymptotically tends to a Gaussian to obtain estimates for the entropy based on the moments of the posterior distribution. A crucial quantity for establishing the limiting case of entropy minimization is a measure of autocorrelation that is used in estimating the variance. The entropy minimization result ultimately follows from a maximization of this autocorrelation coefficient by the uniform string. The number of runs and their respective lengths in $x$ strings play a central role in the distribution of subsequence embeddings, and in turn, in the corresponding entropy. While this property was already hinted at in \cite{atashpendar2015information}, and directly used in the entropy minimization proof for single and double deletions in \cite{atashpendar2018clustering}, our numerical results in this work indicate that the autocorrelation coefficient captures this run-dependent entropy ordering perfectly.

Although this problem was first encountered while investigating some of the classical sub-protocols in quantum key exchange, the underlying combinatorial puzzle is closely related to several well-known challenging problems in formal languages, DNA sequencing and coding theory. The common thread shared between the present work and the previous papers in this series \cite{atashpendar2015information,atashpendar2018clustering} can be described as a characterization of the limiting entropic cases of the distribution of subsequence embeddings over candidate bit strings transmitted via a deletion channel. This problem is directly linked to that of enumerating the occurrences of a fixed pattern as a subsequence in a random text, also known as the \textit{hidden pattern matching} problem \cite{flajolet2006hidden}. Moreover, the distribution of the number of times a string $x$ appears as a subsequence of $y$, lies at the center of the long-standing problem of determining the capacity of deletion channels: knowing this distribution would give us a maximum likelihood decoding algorithm for the deletion channel \cite{mitzenmacher2009survey}. In effect, upon receiving $x$, every set of $n-m$ symbols is equally likely to have been deleted. Thus, for a received sequence, the probability that it arose from a given codeword is proportional to the number of times it is contained as a subsequence in the originally transmitted codeword. More specifically, we have $p(y|x) = p(x|y)\frac{p(y)}{p(x)} = \omega_x(y)d^{n-m}(1-d)^m \frac{p(y)}{p(x)}$, with $d$ denoting the deletion probability. Thus, as inputs are assumed to be a priori equally likely to be sent, we restrict our analysis to $\omega_x(y)$ for simplicity.

In the present work, we confirm the entropy minimization conjecture in the asymptotic limit using results from hidden word statistics. To do so, we relate our study to the hidden pattern matching problem investigated in the works of Flajolet et al. \cite{flajolet2006hidden}. We show how their analytic-combinatorial methods can be applied to resolve the case of fixed output length and $n\rightarrow\infty$, by obtaining estimates for the entropy in terms of the moments of the posterior distribution.

\subsection{Results}

We consider the random variable $\Omega_n$, the number of ways of embedding a given output string into a uniformly random input string. Results from hidden word statistics derived by Flajolet et al. \cite{flajolet2006hidden} establish a Gaussian limit law for $\Omega_n$ by showing that the moments of $\Omega_n$ converge to the appropriate moments of the standard normal distribution and determine the mean and variance of the number of embedding occurrences. We use these results to establish the limiting case of the random variable $\Omega_n$ in terms of its variance via an approach that depends intricately on the form of $x$ by incorporating a measure of autocorrelation of $x$. We then relate these results to the original entropy problem to prove the case of maximal information leakage for large $n$.

\subsection{Structure}

We provide an overview of related work in Section \ref{sec:related-work}. In Section \ref{sec:framework}, we introduce some notation and describe the main definitions, models, and building blocks used in our study. We then relate our work to the hidden pattern matching problem in Section \ref{sec:hws-entropy-estimation} and use results from hidden word statistics to prove the entropy minimization conjecture. Finally, we conclude by presenting some open problems in Section \ref{sec:conclusions}.

\section{Related Work}\label{sec:related-work}

Combinatorial problems related to subsequences and supersequences crop up pervasively in a wide variety of contexts such as formal languages, coding theory, computer intrusion detection and DNA sequencing to name a few. Despite their prevalence in a wide range of disciplines, they still represent a rich area of research offering a variety of open questions. For example, in the realm of stringology and formal languages, the problem of determining the number of distinct subsequences obtainable from a fixed number of deletions, along with closely related problems, have been studied extensively in  \cite{chase1976subsequence,flaxman2004strings,hirschberg1999bounds,hirschberg2000tight}. It is worth pointing out that the same entropy extremizing strings conjectured in \cite{atashpendar2015information, atashpendar2018clustering} and characterized in the present work, have been shown to lead to the minimum and maximum number of distinct subsequences, respectively. The problems of finding shortest common supersequences (SCS) and longest common subsequences (LCS) represent two other well-known NP-hard problems \cite{jiang1995approximation,middendorf1995finding,middendorf2004combined} that involve subproblems similar to our work. Finally, devising efficient algorithms based on dynamic programming for counting the number of occurrences of a subsequence in DNA sequencing is yet another important and closely related line of research \cite{rahmann2006subsequence,elzinga2008algorithms}.

In coding theory, similar long-standing problems have been studied for several decades, and yet many problems still remain elusive in the context of insertion and deletions channels. This includes designing optimal coding schemes and determining the capacity of deletion channels, both of which incorporate the same underlying combinatorial problem addressed in the present work. Considering a finite number of insertions and deletions for designing correcting codes for synchronization errors \cite{ullman1967capabilities,swart2003note,kanoria2013optimal} and reconstructing the original string from a fixed subsequence \cite{graham2015binary} represent two specific and related research areas. More recent works on the characterization of the number of subsequences obtained via the deletion channel \cite{sala2013counting,sala2015three,liron2015characterization}, e.g., in terms of the number of runs in a string, show great overlap with the present work and the clustering techniques developed in the finite-length analysis of the same problem in \cite{atashpendar2018clustering}. This also includes a graph-theoretic approach for deletion correcting codes \cite{cullina2012coloring}, which is also closely related to the finite-length analysis in \cite{atashpendar2018clustering}. Another important body of research in this area is dedicated to developing bounding techniques \cite{ordentlich2014bounding} and deriving tight bounds on the capacity of deletion channels \cite{diggavi2007capacity,kalai2010tight,rahmati2013bounds,cullina2014improvement}.

Despite being of interest to various disciplines, the problem of determining the number of occurrences or embeddings of a fixed subsequence in random sequences had not been comprehensively studied until Flajolet, Szpankowski and Vall\'ee gave a complete characterization of the statistics of this problem in the asymptotic limit \cite{flajolet2006hidden}. However, the state-of-the-art in the finite-length domain remains rather limited in scope. More precisely, the distribution of subsequence embeddings constitutes a central problem in coding theory, with a maximum likelihood decoding argument, which represents the holy grail in the study of deletion channels. A comprehensive survey, which among other things, outlines the significance of figuring out this particular distribution, was given by Mitzenmacher in \cite{mitzenmacher2009survey}.

Another highly relevant area of research worth mentioning corresponds to the work of Gentleman and Mullin \cite{gentleman1989distribution} in the context of DNA sequencing, which seems to have gone largely unnoticed by the other communities. Their analysis revolves around the characterization of the distribution of the frequency of occurrence of nucleotide subsequences based on their overlap capabilities. The overlap capability of a subsequence is central to their approach for deriving the expectation and the variance of the distribution. This is very much in line with the notion of autocorrelation used by Flajolet et al. in \cite{flajolet2006hidden} almost fifteen years later. A similar study based on \cite{gentleman1989distribution}, also related to nucleotide subsequences, is available at \cite{wu2005distributions}.

Although the finite-length domain still remains quite elusive, here we make use of an asymptotic description of the statistics of hidden patterns given by Flajolet et al. in \cite{flajolet2006hidden} to establish the minimal entropy conjecture. To the best of our knowledge, an analysis focusing on a characterization of the mutual information for the deletion channel \cite{drmota2012mutual} is the only study that directly applies results from hidden word statistics to an information-theoretic analysis.

\section{Framework}\label{sec:framework}

In this section we first describe the notation and terminology used in our work and then introduce the main concepts and definitions that we will need throughout. We will also review some of the building blocks used in hidden word statistics that will be required for obtaining our results.

\subsection{Subsequence Embeddings and Entropy}

\paragraph{Notation} We use the notation $[n] = \{1, 2, \dotsc, n\}$ and $[n_1,n_2]$ to denote the set of integers between $n_1$ and $n_2$; individual bits from a string are indicated by a subscript denoting their position, starting at $1$, i.e., $y = (y_{i})_{i \in [n]} = (y_1, \dotsc, y_n) $. We denote by $|S|$ the size of a set $S$ and the length of a binary string. We also introduce the following notation: when dealing with binary strings, $[a]^k$ means $k$ consecutive repetitions of $a \in \{0, 1\}$. Throughout, we use $h(s)$ to denote the Hamming weight of the binary string $s$.

\paragraph{Probabilistic Model and Alphabet} We consider a memoryless i.i.d. source that emits symbols of the input string (supersequence), drawn independently from the binary alphabet $\Sigma = \{0, 1\}$. Let $\Sigma^n$ denote the set of all $\Sigma$-strings of length $n$ and $p_\alpha$ the probability of the symbol $\alpha \in \Sigma$ being emitted. For a given input length $n$, a random text is drawn from the binary alphabet according to the product probability on $\Sigma^n$: $p(y) \equiv p(y_1 \ldots y_ n) = \prod_{i=1}^n p_{y_i} = 2^{-n}$. The probability of a subsequence of length $m$ is defined in a similar manner.

\paragraph{Subsequences and Supersequences} Given $x \in \Sigma^m$ and $y \in \Sigma^n$, let $x = x_1 x_2 \cdots x_m$ denote a subsequence obtained from a supersequence $y = y_1 y_2 \cdots y_n$ with a set of indexes $1 \le i_1 < i_2 < \cdots < i_m \le n$ such that $y_{i_1} = x_1, y_{i_2} = x_2, \dotsc, y_{i_m} = x_m$. Subsequences are obtained by deleting characters from the original string and thus adjacent characters in a given subsequence are not necessarily adjacent in the original string.

\paragraph{Projection Masks} We define $y_\pi = (y_i)_{i \in \pi} = x$ to mean that the string $y$ filtered by the mask $\pi$ gives the string $x$. Let $\pi$ denote a set of indexes $\{j_1, \dotsc, j_m\}$ of increasing order that when applied to $y$, yields $x$, i.e., $x = y_{j_1} y_{j_2} \cdots y_{j_m}$ and $1 \le j_1 < j_2 \cdots j_m \le n$.

\paragraph{Compatible Supersequences} We define the \emph{uncertainty set}, $\Upsilon_{n,x}$, as follows. Given $x$ and $n$, this is the set of $y$ strings that could project to $x$ for some projection mask $\pi$.
\begin{align*}
\Upsilon_{n,x}:=\{y \in \{0,1\}^n: (\exists \pi) [y_\pi=x] \}
\end{align*}

\paragraph{Number of Masks or Subsequence Embeddings} Let $\omega_x(y)$ denote the number of distinct ways that $y$ can project to $x$:
\[
\omega_x(y) := | \{ \pi \in \mathcal{P}([n]): y_\pi = x \} |
\]
we refer to the number of masks associated with a pair $(y, x)$ as the weight of $y$, i.e., the number of times $x$ can be embedded in $y$ as a subsequence. Moreover, we use $\Omega_n(x)$ to denote the number of occurrences of a given subsequence $x$ in a random text of length $n$ generated by a memoryless source.

\paragraph{Entropy} For a fixed subsequence $x$ of length $m$, the underlying weight distribution used in the computation of the entropy is defined as follows. Upon receiving a subsequence $x$, we consider the set of compatible supersequences $y$ of length $n$ (denoted by $\Upsilon_{n,x}$) that can project to $x$ upon $n-m$ deletions. Every $y \in \Upsilon_{n,x}$ is assigned a weight given by its number of masks $\omega_x(y)$, i.e., the number of times $x$ can be embedded in $y$ as a subsequence. We consider the conditional Shannon entropy $H(Y|X=x)$ where $Y$ is confined to the space of compatible supersequences $\Upsilon_{n,x}$. The total number of masks in $\Upsilon_{n,x}$ is given by
\begin{equation}
\mu_{n,m}=\binom{n}{m} \cdot 2^{n-m}
\end{equation}
Thus, forming the normalized weight distribution
\begin{equation}\label{eq:subseq-prob-distribution}
P_x=\Bigg\{\frac{\omega_x(y_1)}{\mu_{n,m}}, \ldots, \frac{\omega_x(y_n)}{\mu_{n,m}} \Bigg\}.
\end{equation}
where $P(Y = y | X = x)$ is given by
\begin{align*}
P(Y = y | X = x) &= \frac{P(Y = y \wedge X = x)}{P(X = x)} = \frac{P(X = x| Y = y) \cdot P(Y = y)}{P(X = x)}
= \frac{\frac{|\{ \pi: \pi(y)=x \}|}{\binom{n}{m}}2^{-n}}{P(X =x)} \\
& = \frac{\omega_x(y)2^{-n}}{\binom{n}{m}P(X =x)} = \frac{\omega_x(y)2^{-n}}{\binom{n}{m}\sum_{y'}P(Y = y')P(X = x | Y = y')} \\
& = \frac{\omega_x(y)2^{-n}}{\binom{n}{m}\sum_{y'}\frac{\omega_x(y')}{\binom{n}{m}}2^{-n}} = \frac{\omega_x(y)}{\sum_{y'}\omega_x(y')} = \frac{\omega_x(y)}{\mu_{n,m}}
\end{align*}
Finally, for simplicity we use $H_n(x)$ throughout this work to refer to the entropy of a distribution $P$ corresponding to a subsequence $x$ as defined below
\begin{equation}\label{eq:shannon-entropy}
H_n(x) = -\sum_i p_i \cdot \log_2(p_i)
\end{equation}
where $p_i$ is given by
\begin{equation*}
p_i = \frac{\omega_x(y_i)}{\mu_{n,m}}.
\end{equation*}

\subsection{Building Blocks from Hidden Word Statistics}

In the terminology of hidden word/pattern statistics, the same problem of determining the number of distinct embeddings of a subsequence in a supersequence is referred to as the ``hidden pattern matching'' problem. Here we review the most relevant concepts introduced in the work of Flajolet, Szpankowski and Vall\'ee \cite{flajolet2006hidden}.

\paragraph{Hidden Patterns and Constraints} Let $\mathcal{W} = w_1, w_2, \ldots, w_m$ denote the pattern or subsequence obtained from the text $T_n = t_1, t_2, \ldots, t_n$, and let $\mathcal{D} = (d_1, \ldots , d_{m-1})$ be an element of $(\mathbb{N}^+ \cup \{ \infty \})^{m-1}$. The pattern matching problem is determined by a pair $(\mathcal{W}, \mathcal{D})$, called a ``hidden pattern'' specification, i.e., a subsequence pattern $\mathcal{W}$ along with an additional set of constraints $\mathcal{D}$ on the indices $i_1, i_2, \ldots, i_m$.  If an occurrence in the form of an $m$-tuple $S=(i_1, i_2, \ldots, i_m)$ with $(1 \le i_1 < i_2 < \ldots < i_m)$ satisfies the constraint $\mathcal{D}$, i.e., $i_{j+1} - i_j \le d_j$, it is then considered to be a valid mask or a position. In essence, the notion of constraints models the existence of gaps between the embeddings of the symbols of a subsequence in a random text. In other words, the analysis considers the number of occurrences of a subsequence as embeddings that satisfy a specific set of distance constraints.

Moreover, let $\mathcal{P}_n(D)$ be the set of all positions subject to the separation constraint $\mathcal{D}$, satisfying $i_m \le n$. Let also $\mathcal{P}(D) = \bigcup_n \mathcal{P}_n(D)$. This allows us to view the number of occurrences $\Omega$ of a subsequence $w$ in text $T$ subject to the constraint $\mathcal{D}$ as a sum of characteristic variables
\begin{equation}
\Omega(T) = \sum_{I \in \mathcal{P}_{|T|}(\mathcal{D})} X_I(T), \quad \text{with} \quad X_I(T) := [[ \mathrm{w}\, \text{occurs at position } I \text{ in } T ]].
\end{equation}
with $[[B]]$ being 1 if the property $B$ holds and 0, otherwise.

The two ends of the spectrum in this model are given by the following. The \emph{fully unconstrained case} is modelled by $\mathcal{D} = (\infty, \ldots, \infty)$; whereas the \emph{constrained problem} is modelled by the case where all $d_j$ are finite. Our study is only concerned with the former, namely the fully unconstrained problem, as we allow an arbitrary number of symbols in between the gaps.

\paragraph{Blocks} A given pattern $x$ is broken down into $b$ independent subpatterns that are called blocks, $x_1, x_2, \ldots, x_b$. The quantity denoted by $b$ is defined as the number of unbounded gaps (the number of indices $j$ for which $d_j = \infty$) plus 1, which is also referred to as the number of blocks. The two extreme cases, namely the fully unconstrained and the fully constrained problem, are thus described by $b=m$ and $b=1$, respectively. For the purpose of our study, we always assume $b=m$. Collections of blocks are then used to form an \emph{aggregate}, which describes the interval of indices that marks a block, the first and last index in an interval. One of the main uses of blocks and aggregates is to model the fact that masks and occurrences of a subsequence can overlap with each other by quantifying the extent to which such overlaps can occur. However, as we are only interested in the fully unconstrained case, covering the notion of aggregates goes beyond the scope of our work. The reader is encouraged to refer to \cite{flajolet2006hidden} for a more complete and detailed presentation of these concepts.

\section{Estimating Entropy using Hidden Word Statistics}\label{sec:hws-entropy-estimation}

We now revisit the original entropy problem and provide an analysis in the asymptotic limit by considering the case of fixed output length $m$ and $n \rightarrow \infty$. This allows us to apply results from hidden pattern statistics to establish the limiting case of minimal entropy. The probabilistic aspects of the statistics of hidden patterns were quantified by Flajolet et al. in an extensive study \cite{flajolet2006hidden}, which was originally motivated by intrusion detection in computer security. Among other things, they showed that the random variable $\Omega_n$ asymptotically tends to a Gaussian. We relate our work to their study and incorporate two key theorems related to hidden patterns to establish the limiting case of minimal entropy via a notion of autocorrelation associated with subsequences.

\subsection{Hidden Word Statistics}

In \cite{flajolet2006hidden}, it is shown that for fixed short strings of length $m$ as $n \to \infty$, the dominant contribution to the moments comes from configurations where the positions of the short strings are minimally intersecting. We will briefly describe the approach used in \cite{flajolet2006hidden}.

For a position $I$ (that is, a subset of $[n]$ of size $m$), let $X_I$ denote the indicator of the event that the long string restricted to $I$ matches the short string. Let $Y_I = X_I - \mathbb{E}(X_I) = X_I - 2^{-m}$. Then $X = \Omega - E = \sum_I Y_I$, and so

\begin{equation}
\mathbb{E}(X^r) = \sum_{I_1, \ldots, I_r} \mathbb{E}(Y_{I_1} \ldots Y_{I_r}).
\end{equation}

Now let $\mathcal{O}_r$ be the combinatorial class consisting of pairs $((I_1, \ldots, I_r), T)$, where the $I_j$ are positions and $T$ is a ``text'' (i.e. a string of length $> m$), taken with weight $Y_{I_1}(T) \ldots Y_{I_r}(T)2^{-|T|}$. Now if $O_r (z)$ is the generating function of $O_r$ with this weighting, we have

\begin{equation}
[z^n]O_r(z) = \sum_{|T|=n}\sum_{I_1,\ldots,I_r} Y_{I_1}(T) \ldots Y_{I_r}(T) 2^{-n} = \mathbb{E}(X^r).
\end{equation}
Note that $[z^n]O_r(z)$ means the coefficient of $z^n$ in $O_r(z)$.

We can partition $\mathcal{O}_r$ according to the number of points covered by some $I_j$. Let $\mathcal{O}_r^{[p]}$ denote the class of elements in which the number of points covered is $rm-p$. Note that if $I_1$ does not intersect with any other $I_j$, then $Y_1$ is independent of $Y_2, \ldots, Y_r$ and so $\mathbb{E}(Y_1 \ldots Y_r) = 0$, so contributions only come from families where each position intersects some other position. Such families are called ``friendly'' and require in particular $p \ge \ceil*{r/2}$.

To obtain the generating function for $\mathcal{O}_r^{[p]}$, we apply a combinatorial isomorphism to group together all the covered points of intersection, so that we have
\begin{equation}
\mathcal{O}_r^{[p]} \cong (\{ 0,1 \}^*)^{rm-p+1} \times \mathcal{B}_r^{[p]},
\end{equation}
where $\mathcal{B}_r^{[p]}$ is the subset of $\mathcal{O}_r^{[p]}$ which is \emph{full}, that is, for which the set of covered points is contiguous. We thus have
\begin{equation}
O_r^{[p]}(z) = \Big( \frac{1}{1-z} \Big)^{rm-p+1} \times B_r^{[p]}(z).
\end{equation}
Since the analysis in \cite{flajolet2006hidden} considers a fixed short string of length $m$ as $n \to \infty$, it is enough to observe that $B_r^{[p]}(z)$ is some \emph{fixed} polynomial, because one can then easily show that the coefficient $[z^n]O_r^{[p]} = O(n^{rm-p})$. This means that however fast the coefficients of $B_r^{[p]}(z)$ grow as $p$ grows, for large enough $n$, the minimal-$p$ term will dominate.

\subsection{Establishing Entropy Minimization  via Hidden Word Statistics}

We will rely on the fact that the distribution of $\Omega_n$ asymptotically tends to a Gaussian and use a measure of autocorrelation defined for subsequences to obtain estimates for the entropy in terms of the moments of the posterior distribution. Indeed, the underlying probability distribution in our original entropy analysis coincides with that of the so-called \emph{hidden pattern matching} problem in which one searches for the number of occurrences of a given \emph{pattern}\footnote{The words ``subsequence'' and ``pattern'' are used interchangeably.} $\mathcal{W}$, as a subsequence in a random text $T$ of length $n$ generated by a memoryless source. More precisely, given that $\Omega_n \sim \mathcal{N}(\mu, \sigma^2)$, we will analyze how the mean and the variance of the distribution change for different $x$ strings in order to resolve the limiting case of minimal entropy exhibited by the uniform string $[0]^m$.

The probabilistic analysis done in \cite{flajolet2006hidden} relies on a description of the structures of interest in formal languages, involving a joint use of combinatorial-enumerative techniques and analytic-probabilistic methods. This approach enables a systematic translation of the combinatorial problem into generating functions. The essential combinatorial-probabilistic features of the problem, such as variance coefficients and a notion of autocorrelation, are derived by using an asymptotic simplification made possible by the use of the singular forms of generating functions. For an extensive and complete coverage of these techniques, we refer the reader to \cite{flajolet2009analytic,sedgewick2013introduction}.

In our work, we will mainly make use of two fundamental theorems presented in \cite{flajolet2006hidden}. The first theorem states that $\Omega_n$ asymptotically tends to a Gaussian, while the second theorem provides analytic expressions for its moments, i.e., the expectation and the variance of $\Omega_n$. Another equally important result that we will use to distinguish between two different subsequences of length $m$ is a measure of autocorrelation that depends intricately on the exact form of $x$. Given that the mean (Eq. \ref{eq:omega-mean}) is constant for all $x$ strings of equal length, the autocorrelation factor, incorporated in the variance coefficient, allows us to differentiate between two subsequences in that it is the only term  that depends on the form of $x$, with all other terms in Eq. \ref{eq:omega-variance} being only a function of $n$ and $m$.

\subsection{Distribution of Subsequence Embeddings in the Asymptotic Limit}

The plots given in Fig.\ref{figure:gaussian-omega} illustrate the convergence of the distribution of $\Omega_n$ to a Gaussian for the subsequence $x=\texttt{01}$ and increasing values of $n$. As already mentioned, the distribution of subsequence embeddings tending to a Gaussian in the asymptotic limit is of particular significance for our work given that $\Omega_n$ is precisely the random variable associated with the weights of the supersequences in $\Upsilon_{n,x}$ for the computation of entropy.

\begin{figure}[ht!]
	\centering
	\includegraphics[scale=0.7]{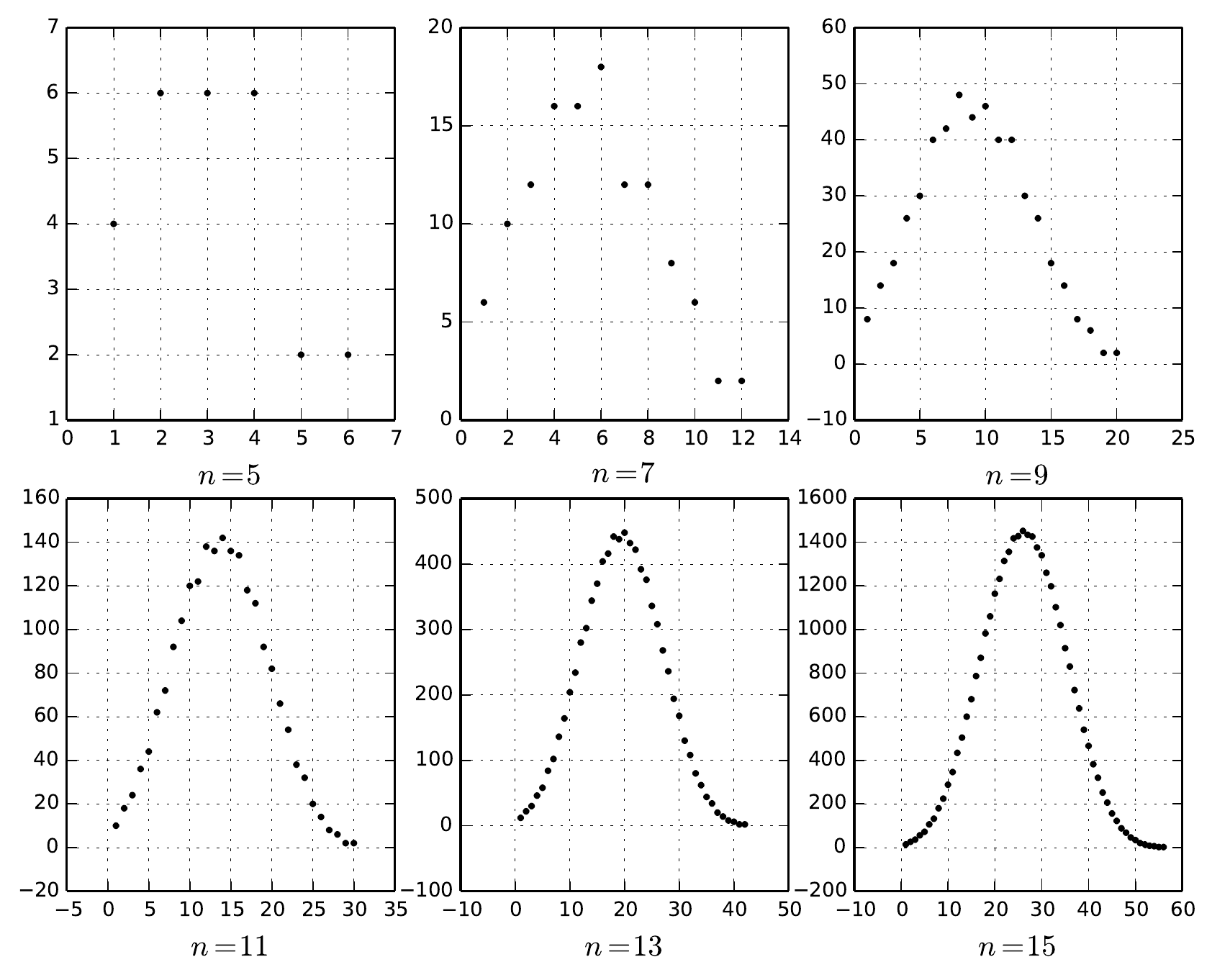}
	\caption{Frequency distribution of $\Omega_n$ converging to a Gaussian for $x=\texttt{01}$ and $n= 5\ldots15$}
	\label{figure:gaussian-omega}
\end{figure}

In the following, we first present the analytic expressions satisfying the mean and the variance of the number of occurrences $\Omega_n$ and adapt them to the parameters of our problem. We then characterize the limiting case of minimal entropy exhibited by the uniform string, i.e. $x = [0/1]^m$, via a notion of autocorrelation coefficient incorporated in the variance.

\subsubsection{Moments and Convergence}

The results provided here have been sourced from \cite{flajolet2006hidden} and adapted to the specific parameters of our problem, i.e., we consider the fully unconstrained setting, restricted to the binary alphabet. For all $x$ strings of length $m$, the mean is constant and therefore, we mainly focus on the variance.
\begin{theorem}\label{thm:hws_moments}\cite{flajolet2006hidden}
The mean and the variance of the number of occurrences $\Omega_n$ of a subsequence $x$ for $p_\alpha = 0.5$, subject to constraint $\mathcal{D}=(\infty, \ldots, \infty)$, and thus $b=m$, are given by
\begin{equation}\label{eq:omega-mean}
	\mathbb{E}[\Omega_n] = \frac{2^{-m}}{m!} n^m \left(1 + O\left(\frac{1}{n}\right)\right)
\end{equation}
\begin{equation}\label{eq:omega-variance}
	\mathbb{V}[\Omega_n] = \frac{2^{-2m}}{(2m-1)!} \kappa^2(x) n^{2m-1} \Bigg(1 + O\Bigg(\frac{1}{n}\Bigg)\Bigg),
\end{equation}
where the autocorrelation $\kappa^2(x)$ is defined by
\begin{equation}\label{eq:kappa}
	\kappa^2(x) := \sum_{1 \le r,s \le m} \binom{r+s-2}{r-1} \binom{2m-r-s}{m-r} [[x_r = x_s]].
\end{equation}
Note that $[[P]]$ denotes the indicator function of the property $P$ (so $[[P]] = 1$ if $P$ holds and $0$ otherwise).
\end{theorem}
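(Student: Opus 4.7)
The plan is to build directly on the combinatorial decomposition sketched in the preceding subsection, where $\Omega_n = \sum_{I} X_I$ runs over all $m$-subsets $I$ of $[n]$, with $X_I$ the indicator that the random text projects to $x$ on $I$. For the expectation, independence of the bits gives $\mathbb{E}[X_I] = 2^{-m}$ uniformly, so that $\mathbb{E}[\Omega_n] = \binom{n}{m} 2^{-m}$; expanding $\binom{n}{m} = n^m/m! \cdot (1 + O(1/n))$ yields \eqref{eq:omega-mean} directly.

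For the variance, I would center by setting $Y_I = X_I - 2^{-m}$, so that $\mathbb{V}[\Omega_n] = \sum_{I_1, I_2} \mathbb{E}[Y_{I_1} Y_{I_2}]$. Pairs with $I_1 \cap I_2 = \emptyset$ factor through independence and contribute zero, so I would stratify by the intersection size $k = |I_1 \cap I_2| \ge 1$. Instantiating the combinatorial isomorphism $\mathcal{O}_r^{[p]} \cong (\{0,1\}^*)^{rm-p+1} \times \mathcal{B}_r^{[p]}$ of the preceding subsection at $r=2$ yields $[z^n]\, O_2^{[p]}(z) = O(n^{2m-p})$, so each additional shared coordinate decreases the exponent of $n$ by one. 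Hence the strata with $k \ge 2$ contribute only $O(n^{2m-2})$ and fold into the $O(1/n)$ error, leaving only $k=1$ to drive the leading order.

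The $k=1$ contribution is then organized by the specific positions $(r,s) \in [m]^2$ at which the coincidence $i_r = j_s$ occurs. Conditioned on this coincidence, the event $X_{I_1} X_{I_2} = 1$ requires the text to agree with both $x_r$ and $x_s$ at the shared coordinate, which collapses the corresponding expectation by a factor of $[[x_r = x_s]]$. Counting configurations with a fixed $(r,s)$ and shared coordinate $p \in [n]$ requires choosing $r-1$ and $s-1$ indices on the left and $m-r, m-s$ on the right; after grouping the four binomials as $\binom{r+s-2}{r-1}\binom{2m-r-s}{m-r}$ and summing over $p$ via the Vandermonde-Chu convolution $\sum_{p}\binom{p-1}{r+s-2}\binom{n-p}{2m-r-s} = \binom{n}{2m-1}$, together with $\binom{n}{2m-1} = n^{2m-1}/(2m-1)!\,(1+O(1/n))$, the sum over $(r,s)$ produces precisely $\kappa^2(x) \cdot n^{2m-1} \cdot 2^{-2m}/(2m-1)!$, matching \eqref{eq:omega-variance}.

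The main obstacle is the rigorous bookkeeping that (i) controls the lower-order contributions from the $k\ge 2$ strata and (ii) handles the symbol-independent cross terms arising from the centering, both of which must be absorbed consistently into the $O(1/n)$ remainder rather than disturbing the leading coefficient. The cleanest route, following Flajolet, Szpankowski and Vall\'ee, is singularity analysis of the generating functions $O_r^{[p]}(z)$: the exponent of the $1/(1-z)$ factor exactly tracks the power of $n$ in $[z^n]\, O_r^{[p]}(z)$, so the leading-order coefficient is pinned to the friendly $k=1$ pairs and hence to $\kappa^2(x)$, while all remaining strata and cross terms are uniformly packaged into the $O(1/n)$ error.
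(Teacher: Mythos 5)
Your strategy is the right one and follows the same analytic--combinatorial route that the paper imports from Flajolet--Szpankowski--Vall\'ee (the paper itself only cites this theorem; its surrounding text sketches exactly the $Y_I$-decomposition and the $\mathcal{O}_r^{[p]}$ isomorphism you invoke). The expectation computation is correct, the reduction of the variance to overlapping pairs is correct, the $k\ge 2$ strata are indeed $O(n^{2m-2})$, and identifying $\binom{r+s-2}{r-1}\binom{2m-r-s}{m-r}$ as the interleaving count and summing over the shared coordinate by Vandermonde is exactly the combinatorial content of $\kappa^2$. The genuine gap is the step you defer as ``bookkeeping'' in (ii): the symbol-independent cross terms produced by the centering are \emph{not} lower order and cannot be absorbed into the $O(1/n)$ remainder. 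For an ordered pair $(I_1,I_2)$ overlapping in a single coordinate at positions $(r,s)$ one has
\[
\mathbb{E}[Y_{I_1}Y_{I_2}] \;=\; 2^{-(2m-1)}\,[[x_r=x_s]] \;-\; 2^{-2m},
\]
and there are $\binom{n}{2m-1}\sum_{r,s}\mathcal{M}_{r,s}=\binom{n}{2m-1}\,m\binom{2m-1}{m}=\Theta(n^{2m-1})$ such pairs, so the $-2^{-2m}$ terms contribute at exactly the leading order. (Note also that the match term carries $2^{-(2m-1)}=2\cdot 2^{-2m}$, not the $2^{-2m}$ appearing in your final display, so as written your argument silently drops a factor of $2$ as well.) Carried through correctly, your computation yields
\[
\mathbb{V}[\Omega_n] \;=\; \frac{2^{-2m}}{(2m-1)!}\left(2\kappa^2(x)-m\binom{2m-1}{m}\right)n^{2m-1}\left(1+O\!\left(\frac{1}{n}\right)\right),
\]
which agrees with \eqref{eq:omega-variance} only when $\kappa^2(x)=m\binom{2m-1}{m}$, i.e.\ for the constant strings. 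A direct check with $m=2$, $x=\texttt{01}$ confirms this: the exact variance is $\frac{3}{16}\binom{n}{2}+\frac{1}{8}\binom{n}{3}\sim n^3/48$, whereas \eqref{eq:omega-variance} with $\kappa^2(\texttt{01})=4$ would give $n^3/24$.

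The source of the mismatch is that in the original reference the correlation coefficient is defined with the \emph{centered} weight $\bigl([[w_r=w_s]]/p_{w_r}-1\bigr)$, which for $p_{w_r}=1/2$ is $2[[x_r=x_s]]-1$; the statement as transcribed here replaces this by the bare indicator $[[x_r=x_s]]$ without adjusting the constant. So no honest completion of your argument can terminate at \eqref{eq:omega-variance} as displayed for general $x$. This does not damage the paper's downstream use of the theorem, since $\kappa^2\mapsto 2\kappa^2-m\binom{2m-1}{m}$ is increasing and so the variance ordering and its unique maximization by $[0]^m,[1]^m$ are unchanged, but your proof as proposed is incomplete precisely at the point where this correction enters, and the claim that the centering terms fold into the $O(1/n)$ error is false.
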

\begin{theorem}\label{thm:hws_gaussian}\cite{flajolet2006hidden}
\begin{equation}
X_n:=\frac{\Omega_n-\mathbb{E}(\Omega_n)}{\sqrt{\mathbb{V}(\Omega_n)}}
\end{equation}
converges in measure to a standard normal distribution.
\end{theorem}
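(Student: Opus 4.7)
The plan is to prove convergence in distribution via the method of moments: since the standard normal distribution is determined by its moments (it satisfies Carleman's condition), it suffices to show that $\mathbb{E}(X_n^r) \to \mu_r$ for every non-negative integer $r$, where $\mu_r = (r-1)!!$ for even $r$ and $\mu_r = 0$ for odd $r$. The preceding discussion of the classes $\mathcal{O}_r^{[p]}$ already provides the main machinery; what remains is to isolate the leading-order term and verify it produces exactly the Gaussian moments.

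First I would write $\mathbb{E}((\Omega_n - \mathbb{E}(\Omega_n))^r) = \sum_{I_1, \ldots, I_r} \mathbb{E}(Y_{I_1} \cdots Y_{I_r})$ and restrict the sum to friendly families, using the independence argument from the excerpt (if some $I_j$ is disjoint from all others then the expectation factors and vanishes because $\mathbb{E}(Y_{I_j}) = 0$). Next I would invoke the decomposition $\mathcal{O}_r^{[p]} \cong (\{0,1\}^*)^{rm-p+1} \times \mathcal{B}_r^{[p]}$ with $B_r^{[p]}(z)$ a fixed polynomial, giving $[z^n] O_r^{[p]}(z) = \Theta(n^{rm-p})$ in the worst case. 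The key combinatorial observation is that a friendly $r$-tuple of positions covers at most $rm - \lceil r/2 \rceil$ distinct points, since every position must share at least one point with another, and each shared point can be counted against at most two positions. Hence the minimum value of $p$ over friendly configurations is $\lceil r/2 \rceil$.

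For odd $r$, this gives $\mathbb{E}((\Omega_n - E)^r) = O(n^{rm - (r+1)/2})$, while $\mathbb{V}(\Omega_n)^{r/2} = \Theta(n^{rm - r/2})$ by Theorem~\ref{thm:hws_moments}, so the ratio tends to zero. For even $r = 2k$, the dominant $p = k$ contribution comes precisely from configurations in which the $r$ positions split into $k$ disjoint pairs, each pair sharing exactly one index (sharing more indices, or having two pairs overlap, strictly increases $p$). Such a configuration factorises: the joint expectation $\mathbb{E}(Y_{I_1} \cdots Y_{I_r})$ splits as a product over the $k$ pairs, and summing a single pair over all choices of positions and texts of length $n$ reproduces $\mathbb{V}(\Omega_n)$ up to the claimed $O(1/n)$ error. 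There are $(2k)!/(2^k k!) = (2k-1)!!$ ways to pair up the $r$ positions, so the leading asymptotics give
\begin{equation*}
\mathbb{E}((\Omega_n - E)^{2k}) = (2k-1)!! \cdot \mathbb{V}(\Omega_n)^k \left(1 + O\left(\tfrac{1}{n}\right)\right),
\end{equation*}
which upon dividing by $\mathbb{V}(\Omega_n)^k$ yields exactly the $2k$-th moment of a standard normal.

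The main obstacle is the factorisation claim in the even case: one must show rigorously that among friendly configurations achieving $p = r/2$, the dominant contribution truly comes from disjoint one-point pairings, and that within each pair the sum over admissible positions and texts recovers $\mathbb{V}(\Omega_n)$ to leading order rather than some spurious constant. This amounts to checking that the polynomial $B_r^{[r/2]}(z)$ factorises over the pair structure, and that configurations where three or more positions share indices, or where two pairs additionally touch, are all absorbed into the $O(1/n)$ error. Apart from this bookkeeping, the argument follows the standard Wick/Isserlis pattern and the theorem drops out of Theorem~\ref{thm:hws_moments} together with the generating-function estimates already recalled from \cite{flajolet2006hidden}.
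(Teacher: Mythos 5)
Your proposal is correct and follows essentially the same route as the paper and the reference it cites: the method of moments carried out via the friendly-family/$\mathcal{O}_r^{[p]}$ decomposition sketched in Section~\ref{sec:hws-entropy-estimation}, with the minimal overlap $p=\lceil r/2\rceil$ forcing the odd normalized moments to vanish and the even ones to reduce to the $(2k-1)!!$ disjoint single-point pairings, each contributing one factor of $\mathbb{V}(\Omega_n)$. Note that the paper does not actually prove this theorem itself --- it imports it from \cite{flajolet2006hidden} and records the moment formulation as Lemma~\ref{lem:momconv} --- so your reconstruction, including your honest flagging of the factorisation bookkeeping for $B_r^{[r/2]}(z)$ as the one step needing care, matches the intended argument.
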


We encapsulate the multiplicands in the definition of $\kappa^2(x)$ into matrices, viewing the indicator function as a mask on the matrix of binomial coefficients.  Let $\mathcal{B}$ be the matrix representing the indicator function $\mathcal{B}_{r,s}:=[[x_r=x_s]]$, and let $\mathcal{M}$ be the matrix of binomial coefficients
\[
\mathcal{M}_{r,s} = \binom{r+s-2}{r-1} \binom{2m-r-s}{m-r}.
\]
Write $\mathcal{R}=\mathcal{B}\circ\mathcal{M}$, the Hadamard or elementwise product of $\mathcal{B}$ and $\mathcal{M}$, for the result of applying the mask $\mathcal{B}$ to the matrix $\mathcal{M}$.  We then have an equivalent formulation of equation \eqref{eq:kappa}, namely
\[
\kappa^2(x) = \sum_{r=1}^m \sum_{s=1}^m \mathcal{R}_{r,s}.
\]

\subsubsection{Autocorrelation}

It is worthwhile to provide some explanation of the combinatorial meaning of the autocorrelation coefficient $\kappa^2$ derived in \cite{flajolet2006hidden}, in view of its significance in the analysis that follows.

The coefficient $\kappa^2$ is related to a generalization of the autocorrelation polynomial originally introduced for classical string matching by Guibas and Odlyzko \cite{guibas1981periods,guibas1981string}.  The variance of $\Omega_n$ is determined by the probability that a random pair of $m$-subsets of a random long string are \emph{both} matches for the short string, and how this compares to the square of the corresponding probability for a single $m$-subset.

Analytic-combinatorial methods show that the dominant contribution for large $n$ comes from pairs which overlap in only a single position, so computing the variance amounts to counting the number of triples consisting of a long string and a pair of $m$-subsets intersecting in precisely one location such that both are matches for the short string.  Grouping the chosen locations together introduces a constant factor of $\binom{n}{2m-1}2^{n-(2m-1)}$, and so it suffices to count the number of ways to interleave two copies of the short string, with a single intersection.  This quantity is the autocorrelation coefficient $\kappa^2(x)$.

Explicitly, $\mathcal{M}_{r,s}$ is the number of combinations with the $r^{th}$ location of the first set meeting the $s^{th}$ location of the second: $\binom{r+s-2}{r-1}$ is the number of interleavings of the $r-1$ and $s-1$ locations before this, and $\binom{2m-r-s}{m-r}$ the number of interleavings of the $m-r$ and $m-s$ locations after.

\subsection{Maximal Autocorrelation}

We now study the extremization of the variance of $\Omega_n$ by analyzing the extreme values of the autocorrelation $\kappa^2$. Here we consider the all-0s and all-1s strings (x=$[0]^m$ and $[1]^m$), for which the autocorrelation matrix contains $m^2$ 1's: $\forall i,j \in \{1 \ldots m\}: x_i = x_j$.
\begin{theorem}\label{theorem:maximal-variance}
Let $x$ be a string of length $m$. Then
\begin{equation}\label{eq:kappa-max}
\kappa^2(x) \leq \kappa^2\left([0]^m \right) 
= \kappa^2\left([1]^m \right) = m\binom{2m-1}{m}.
\end{equation}
\end{theorem}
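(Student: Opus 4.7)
The inequality part of \eqref{eq:kappa-max} will be immediate. My plan is to observe that every entry of the matrix $\mathcal{M}$ is non-negative (being a product of two binomial coefficients) and that the mask $\mathcal{B}_{r,s} = [[x_r=x_s]]$ takes values in $\{0,1\}$, equalling $1$ for all $r,s$ precisely when $x$ is a constant string. Then
\[
\kappa^2(x) = \sum_{r,s=1}^m \mathcal{M}_{r,s}\,[[x_r=x_s]] \le \sum_{r,s=1}^m \mathcal{M}_{r,s} = \kappa^2([0]^m) = \kappa^2([1]^m),
\]
with the last equality following because $\kappa^2$ depends only on the equality pattern of $x$ and not on the specific symbols. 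So the real content of the theorem is evaluating the unmasked full sum in closed form.

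For that evaluation I would fix $r$ and carry out the inner sum over $s$. After substituting $j = s-1$, the inner sum reads
\[
\sum_{j=0}^{m-1}\binom{(r-1)+j}{j}\binom{(m-r)+(m-1-j)}{m-1-j},
\]
which is exactly an instance of the Vandermonde--Chu convolution $\sum_{k=0}^{n}\binom{a+k}{k}\binom{b+n-k}{n-k} = \binom{a+b+n+1}{n}$, taken with $a = r-1$, $b = m-r$, and $n = m-1$. It therefore evaluates to $\binom{2m-1}{m-1}$, independently of $r$. Summing over $r \in \{1,\ldots,m\}$ then gives $m\binom{2m-1}{m-1} = m\binom{2m-1}{m}$, as required.

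The main obstacle is nothing more than spotting the substitution that exposes the Vandermonde--Chu form; the rest is mechanical manipulation of binomial coefficients. An alternative route I would consider, which avoids any identity, uses the combinatorial interpretation of $\kappa^2$ already given in the discussion of autocorrelation: $\mathcal{M}_{r,s}$ counts interleavings of two length-$m$ copies of the pattern sharing precisely the pair of coordinates $(r,s)$, so the unmasked double sum counts ordered pairs of $m$-subsets of $[2m-1]$ whose intersection has size exactly one. Choosing the shared coordinate gives $2m-1$ options, and splitting the remaining $2m-2$ positions between the two difference sets contributes $\binom{2m-2}{m-1}$, for a total of $(2m-1)\binom{2m-2}{m-1} = m\binom{2m-1}{m}$. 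Either argument yields the stated bound; including both would serve as a useful cross-check.
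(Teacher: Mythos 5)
Your argument for the inequality is essentially the paper's own proof: the paper likewise observes that $\mathcal{M}$ is independent of $x$, that every entry is (strictly) positive, and that the constant strings are exactly the strings whose mask $\mathcal{B}$ is all ones, so that $\kappa^2(x)\le\sum_{r,s}\mathcal{M}_{r,s}=\kappa^2([0]^m)$. Where you go beyond the paper is in actually evaluating that full sum: the paper's proof stops at ``attains its maximal value'' and simply asserts the closed form $m\binom{2m-1}{m}$ without derivation. Both of your evaluations are correct. The Vandermonde--Chu route checks out: with $j=s-1$ the inner sum is $\sum_{j=0}^{m-1}\binom{(r-1)+j}{j}\binom{(m-r)+(m-1-j)}{m-1-j}=\binom{2m-1}{m-1}$ independently of $r$, and summing over $r$ gives $m\binom{2m-1}{m-1}=m\binom{2m-1}{m}$ (consistent with the value $630$ for $m=5$ in the paper's Table~\ref{tab:kappa-entropy}). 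The combinatorial count is also right, since two $m$-subsets of $[2m-1]$ meeting in one point automatically cover $[2m-1]$, giving $(2m-1)\binom{2m-2}{m-1}=\frac{(2m-1)!}{(m-1)!\,(m-1)!}=m\binom{2m-1}{m}$; this second argument fits naturally with the paper's own interpretation of $\kappa^2$ as counting interleavings of two copies of the pattern with a single shared position. One small refinement worth making explicit: since every entry of $\mathcal{M}$ is strictly positive (not merely non-negative), the inequality is strict for any non-constant $x$, which is the uniqueness that Theorem~\ref{theorem:fixed-k} later relies on.
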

\begin{proof}
Since $\mathcal{M}$ is independent of the form of $x$, we focus only on the indicator matrix $\mathcal{B}$. It is clear that the constant $x$ strings comprising all 0's and all 1's are the unique strings that result in an all-ones masking matrix $\mathcal{B}$. Consequently,  $\kappa^2([0/1]^m)$ includes all of the $m^2$ terms involved in $\mathcal{M}$ and thus attains its maximal value, i.e., $\kappa^2([0/1]^m) = \sum_{1 \le r,s \le m} \mathcal{R}_{r,s} =  \sum_{1 \le r,s \le m} \mathcal{M}_{r,s}$, hence Eq. \ref{eq:kappa-max}.
\end{proof}

The alternating $x$ string $x=\texttt{1010...}$ appears to lie at the other end of the entropy spectrum. While the proof for the maximization of the autocorrelation coefficient by the all 0's string was rather straightforward, showing its minimization still escapes us. We simply state the minimization as a conjecture.

\begin{conjecture}\label{theorem:minimal-variance}
	The alternating subsequence of length $m$, i.e., $x = \texttt{1010...}$, minimizes the autocorrelation coefficient $\kappa^2$.
\end{conjecture}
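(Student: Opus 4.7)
The plan is to recast the conjecture as a discrete quadratic minimisation problem and attack it via the lattice-path structure of $\mathcal{M}$. Setting $u_r := 1 - 2 x_r \in \{-1,+1\}$, the identity $[[x_r = x_s]] = \tfrac{1}{2}(1 + u_r u_s)$ combined with the row-sum $\sum_{s=1}^{m}\mathcal{M}_{r,s} = \binom{2m-1}{m}$ (an easy consequence of Chu--Vandermonde, or of the interpretation $\mathcal{M}_{r,s}=$ number of NE lattice paths from $(0,0)$ to $(m-1,m-1)$ through $(r-1,s-1)$) yields
\[
\kappa^2(x) \;=\; \tfrac{m}{2}\binom{2m-1}{m} \;+\; \tfrac{1}{2}\, u^T\mathcal{M}\,u.
\]
The conjecture is therefore equivalent to the statement that $u^T \mathcal{M}\, u$ is minimised over $\{-1,+1\}^m$ at the alternating vector $u^*_r := (-1)^{r-1}$ (and its negation).

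A short anti-diagonal calculation gives $(u^*)^T \mathcal{M}\, u^* = \sum_{r,s}(-1)^{r+s}\mathcal{M}_{r,s} = \binom{2m-2}{m-1}$: each lattice path visits exactly one point per anti-diagonal $r+s=c$, so with alternating signs the $(2m-1)$ contributions telescope to $+1$; multiplying by the number $\binom{2m-2}{m-1}$ of paths gives the claim, and hence $\kappa^2(x^*) = m \binom{2m-2}{m-1}$. For an arbitrary $u \in \{-1,+1\}^m$, setting $S := \{r : u_r \neq u^*_r\}$ and expanding the bilinear form (using the symmetry $\mathcal{M}^T=\mathcal{M}$) gives
\[
u^T \mathcal{M}\, u \;-\; (u^*)^T \mathcal{M}\, u^* \;=\; -4 \sum_{r \in S,\, s \in S^c}(-1)^{r+s}\mathcal{M}_{r,s},
\]
so the conjecture reduces to the \emph{bipartite alternating-sign inequality}
\[
\sum_{r \in S,\, s \in S^c}(-1)^{r+s}\mathcal{M}_{r,s} \;\leq\; 0 \qquad \text{for every } S \subseteq [m]. \qquad (\star)
\]

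The case $|S|=1$ (local minimality of $u^*$ under single-bit flips) follows immediately by applying the anti-diagonal identity to a single row of $\mathcal{M}$. For the general case I would pursue two complementary routes: \emph{(i)} induction on $m$ via a recurrence that decomposes $\mathcal{M}^{(m)}$ into $\mathcal{M}^{(m-1)}$-type blocks obtained by conditioning lattice paths on their behaviour at a boundary column; \emph{(ii)} a sign-reversing involution on pairs (lattice path $P$, marked point $(r-1,s-1)\in P$ with $r\in S,\; s\in S^c$) that pairs every positive contribution in $(\star)$ with a negative one, leaving a non-positive residue.

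\textbf{Main obstacle.} Proving $(\star)$ for arbitrary bipartitions is the crux, and I expect it to be the hard part. Quadratic minimisation over $\{-1,+1\}^m$ is NP-hard in general, and weighted Max-Cut can admit many local optima, so any proof must exploit the lattice-path structure of $\mathcal{M}$ in an essential way. Identifying the right Lindstr\"om--Gessel--Viennot-style involution, or a clean $\mathcal{M}^{(m)}\to\mathcal{M}^{(m-1)}$ recurrence compatible with the alternating-sign decomposition, is the principal technical hurdle; standard local-search and spectral arguments on their own are unlikely to suffice.
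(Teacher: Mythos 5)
First, note that the paper does not prove this statement: it is stated explicitly as a conjecture, and the authors say that a proof of the minimization of $\kappa^2$ by the alternating string ``still escapes us.'' Your submission is likewise not a proof but a reduction plus a plan, and you say so yourself. The reduction itself is correct and, as far as it goes, checks out. Writing $u_r = 1-2x_r$ and using the constant row sums $\sum_s \mathcal{M}_{r,s} = \binom{2m-1}{m}$ does give $\kappa^2(x) = \tfrac{m}{2}\binom{2m-1}{m} + \tfrac{1}{2}u^T\mathcal{M}u$; the anti-diagonal/lattice-path argument for $\sum_{r,s}(-1)^{r+s}\mathcal{M}_{r,s} = \binom{2m-2}{m-1}$ is valid (each of the $\binom{2m-2}{m-1}$ paths meets each of the $2m-1$ anti-diagonals once, and the alternating sum of $2m-1$ signs starting and ending with $+1$ is $+1$); and the resulting closed form $\kappa^2(\texttt{1010}\ldots) = m\binom{2m-2}{m-1}$ agrees with the paper's Table 1 ($m=5$ gives $5\binom{8}{4}=350$). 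The passage from the bilinear form to the bipartite inequality $(\star)$ via the flip set $S$ is also correct.

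The genuine gap is exactly where you locate it: the inequality $(\star)$, $\sum_{r\in S,\,s\in S^c}(-1)^{r+s}\mathcal{M}_{r,s}\leq 0$ for every $S\subseteq[m]$, is asserted but not proved for any $|S|\geq 2$, and neither of the two proposed routes (a block recurrence $\mathcal{M}^{(m)}\to\mathcal{M}^{(m-1)}$, or a sign-reversing involution on marked lattice paths) is carried out even in outline. Even the $|S|=1$ case, which you say ``follows immediately,'' requires the separate identity for a single alternating row sum $\sum_s(-1)^{r+s}\mathcal{M}_{r,s}$ and its comparison with the diagonal entry $\mathcal{M}_{r,r}$, which you do not supply. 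Since the whole difficulty of the conjecture is concentrated in $(\star)$ — as you acknowledge, quadratic minimization over $\{-1,+1\}^m$ is hard in general, so everything hinges on exploiting the specific structure of $\mathcal{M}$ — the proposal does not establish the statement. What it does contribute is a clean equivalent reformulation and the exact conjectured minimum value, which is more than the paper itself offers for this direction, but the conjecture remains open under your approach just as it does in the paper.
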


\subsection{Entropy}\label{sec:entropy}

We briefly review the results of the entropy analysis in which it is conjectured that the all 0's and the alternating $x$ string, minimize and maximize the entropy, respectively.

The characteristic parameter $\Omega_n$ describes the distribution that underlies the entropy in our problem. The plot given in Fig. \ref{figure:entropy} shows the values of the Min-Entropy ($H_\infty$), the second-order R\'{e}nyi entropy ($R$) and the Shannon entropy ($H$) computed for all $x$ strings of length $m=5$, with $n=8$.
\begin{figure}[ht!]
	\centering
	\includegraphics[scale=0.8]{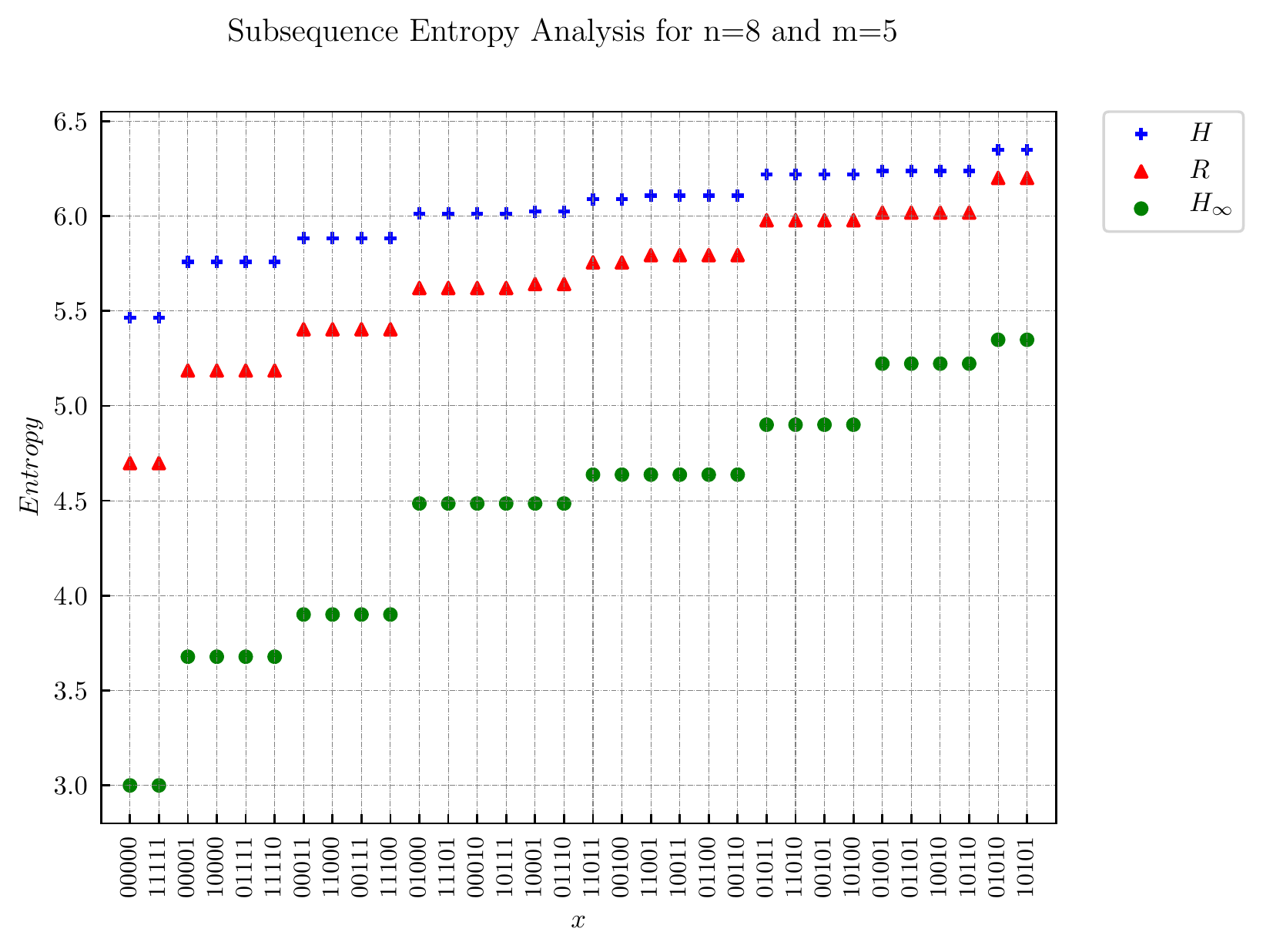}
	\caption{Min-Entropy $H_\infty$, Shannon entropy $H$ and the second-order R\'{e}nyi entropy $R$ vs. $x$}
	\label{figure:entropy}
\end{figure}

\subsubsection{Calculating Entropy From Moments of Distribution}

An equivalent formulation of Theorem \ref{thm:hws_gaussian} (and in fact the form in which it is proved) is that the moments of the (normalized) converge to the corresponding moments of the standard normal distribution:

\begin{lemma}\label{lem:momconv}The moments of the normalized version of $\Omega_n$ converge to the corresponding moments of the standard normal distribution.  That is,
\[\mathbb{E}\left(\left(\frac{\Omega_n - \mathbb{E}(\Omega_n)}{\sqrt{\mathbb{V}(\Omega_n)}} \right)^r \right) \rightarrow \begin{cases}0 &r \mbox{ odd} \\
(r-1)\times (r-3)\times \ldots \times 1 &r \mbox{ even}.\end{cases}\]
\end{lemma}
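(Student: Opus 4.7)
The plan is to prove moment convergence directly, by extending the combinatorial-analytic computation sketched earlier from the variance to all moments; since the standard normal distribution is uniquely determined by its moments, this statement is equivalent to (and slightly stronger than) Theorem~\ref{thm:hws_gaussian}. For each integer $r \ge 1$, the first step is to use the expansion
\[
\mathbb{E}\bigl((\Omega_n - \mathbb{E}(\Omega_n))^r\bigr) = \sum_{p \geq \lceil r/2 \rceil} [z^n]\, O_r^{[p]}(z),
\]
which combines the moment sum $\sum_{I_1,\dots,I_r}\mathbb{E}(Y_{I_1}\cdots Y_{I_r})$ with the friendliness observation (only families in which every $I_j$ intersects some other contribute), together with the uniform bound $[z^n] O_r^{[p]}(z) = O(n^{rm-p})$ recalled in the excerpt. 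By Theorem~\ref{thm:hws_moments}, the normalizer satisfies $\mathbb{V}(\Omega_n)^{r/2} = \Theta(n^{r(2m-1)/2})$.

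For odd $r$ the conclusion is immediate: friendliness forces $p \geq (r+1)/2$, so each surviving term is $O(n^{rm - (r+1)/2}) = O(n^{r(2m-1)/2 - 1/2})$, and division by $\mathbb{V}(\Omega_n)^{r/2}$ yields $O(n^{-1/2}) \to 0$, matching the vanishing odd moments of the standard normal.

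For even $r$ the dominant contribution comes from $p = r/2$, and the crux is to count and evaluate the minimizing configurations. The plan is to argue via the ``intersection graph'' on the $r$ positions (with edges marking nonempty intersection): friendliness forces minimum degree $1$, so at least $\lceil r/2 \rceil$ edges are needed, with equality iff the graph is a perfect matching; moreover each such edge contributes at least $1$ to the overlap count $p$, with equality only when the two positions meet at a single point. Hence the $p = r/2$ configurations are precisely the $(r-1)!! = (r-1)(r-3)\cdots 1$ perfect matchings in which every matched pair intersects in exactly one location and positions from distinct pairs are disjoint. For each such matching the combinatorial isomorphism of the excerpt factorizes across pairs, so the contribution of each matching is asymptotically $\mathbb{V}(\Omega_n)^{r/2}$, i.e.\ the $r=2$ leading behaviour applied independently $r/2$ times. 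Summing yields
\[
\mathbb{E}\bigl((\Omega_n - \mathbb{E}(\Omega_n))^r\bigr) = (r-1)!!\,\mathbb{V}(\Omega_n)^{r/2}(1 + o(1)),
\]
which after normalization gives the claimed limit $(r-1)!!$.

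The hard part will be the factorization across pairs at the level of generating functions: one must verify that when the $r/2$ matched pairs are pairwise disjoint, the polynomial $B_r^{[r/2]}(z)$ decomposes at the leading singular order as a product of $r/2$ copies of $B_2^{[1]}(z)$, so that $[z^n]$-extraction via singularity analysis of $(1-z)^{-(rm-p+1)}$ yields a clean product of pairwise variance contributions with no residual cross terms. A secondary check is that no friendly configuration with $p = r/2$ lies outside the perfect-matching-with-single-intersection family; this is forced by the graph-theoretic lower bound above, since any deviation---an edge contributing more than one shared point, or a denser friendly graph---strictly increases $p$.
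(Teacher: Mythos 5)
Your reconstruction is essentially the paper's approach: the paper does not prove this lemma itself but imports it verbatim from Flajolet--Szpankowski--Vall\'ee as ``the form in which [Theorem~\ref{thm:hws_gaussian}] is proved,'' and the proof in that source is precisely the moment/diagram method you describe --- odd moments killed by the parity gap $p \ge \lceil r/2\rceil$ against the normalizer $\Theta(n^{r(2m-1)/2})$, even moments dominated by the $(r-1)!!$ perfect matchings with single-point intersections, each contributing the pairwise variance term --- built on exactly the $O_r^{[p]}$ decomposition and friendliness observation sketched in the paper's Section~IV. The one genuinely technical step, the asymptotic factorization across disjoint matched pairs at the level of the generating functions $B_r^{[p]}(z)$, is correctly flagged by you as the remaining work and is carried out in the cited reference.
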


We have a distribution $\Omega$ and the goal is to estimate $\mathbb{E}(\Omega \mathrm{log} \Omega)$, given the moments of $\Omega$. Let $E = \mathbb{E}(\Omega)$, and let the pdf of $\Omega$ be $f$. By Taylor's theorem, we have

\begin{align}\label{eq:taylor}
    \Omega \log\Omega = E\log E + (\log E + 1)(\Omega - E) + \frac{(\Omega - E)^2}{2E} - \frac{(\Omega-E)^3}{6E^2} +  \mathcal{R}(\Omega),
\end{align}
where $\mathcal{R}(\Omega)$ is the integral form of remainder, i.e.
\[\mathcal{R}(\Omega) = \int_E^\Omega \frac{(\Omega-t)^3}{6t^3} dt.\]

Note that $\mathcal{R}(\Omega)$ is non-negative for all $\Omega$.  Now whenever $\Omega \geq \frac{1}{2}E$, we have
\[\mathcal{R}(\Omega) \leq \int_E^\Omega \frac{8(\Omega-t)^3}{6E^3} dt = \frac{8(\Omega-E)^4}{24E^3}.\]

On the other hand if $\Omega < \frac{1}{2} E$ then 
\begin{align*}\mathcal{R}(\Omega) &= \int^\Omega_{\frac{E}{2}} \frac{(\Omega-t)^3}{6t^3} dt + \int^{\frac{E}{2}}_E \frac{(\Omega-t)^3}{6t^3} dt \\
&\leq \int^\Omega_{\frac{E}{2}} \frac{1}{6} dt + \int_E^{\frac{E}{2}} \frac{8(\Omega-t)^3}{6E^3} dt \\
&\leq \frac{E}{12} + \int_E^\Omega \frac{8(\Omega-t)^3}{6E^3} dt \\
&= \frac{E}{12} + \frac{8(\Omega-E)^4}{24E^3}.
\end{align*}

Hence we have that
\begin{equation}\label{eq:rembound1}
|\E(\mathcal{R})| \leq \frac{1}{12}E\PP\left(\Omega < \frac{1}{2}E\right) + \frac{\E((\Omega-E)^4)}{3E^3}.
\end{equation}

We obtain a Chebychev bound on the first term:
\begin{align*}
    \PP\left(\Omega < \frac{1}{2}E\right) &\leq \PP\left(|\Omega-E| > \frac{1}{2}{E}\right) \\
    &\leq \frac{\E\left((\Omega-E)^4\right)}{\left(\frac{1}{2}E\right)^4} = \frac{16\E\left((\Omega-E)^4\right)}{E^4} 
\end{align*}

Substituting this into (\ref{eq:rembound1}) gives
\begin{equation}\label{eq:rembound}|\E(\mathcal{R})| \leq \frac{5\E((\Omega-E)^4)}{3E^3}.\end{equation}

Hence taking expectations of (\ref{eq:taylor}) gives 
\begin{equation}\label{eq:est}\E(\Omega\log\Omega) = E\log E + \frac{\mathbb{V}(\Omega)}{2E} - \frac{E((\Omega-E)^3)}{6E^2} + \epsilon\left(\frac{5}{3}\frac{\E((\Omega-E)^4)}{E^3}\right),
\end{equation}
where the notation $\epsilon(x)$ means an error term of magnitude at most $x$.

\subsubsection{Minimal Entropy}

We are now in a position to prove the main theorem of this Section, that (for sufficiently large $n$), the entropy is minimized uniquely by the constant strings $[0]^m, [1]^m$.

\begin{theorem}\label{theorem:fixed-k}
For all $m$, there is some $N$ such that for all $n>N$, and any string $x$ of length $m$, we have
\[H_n(x) \geq H_n\left([0]^m\right),\]
with equality only if $x\in \{[0]^m, [1]^m\}$.
\end{theorem}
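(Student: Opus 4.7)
The plan is to convert $H_n(x)$ into a form involving moments of $\Omega_n$ and then invoke the variance asymptotics of Theorem~\ref{thm:hws_moments} together with the autocorrelation extremization of Theorem~\ref{theorem:maximal-variance}. Summing over all $y \in \{0,1\}^n$ and using that $E := \mathbb{E}(\Omega_n) = \mu_{n,m}/2^n$ is independent of $x$, one rewrites
\[
H_n(x) = \log_2 \mu_{n,m} - \frac{\mathbb{E}(\Omega_n \log_2 \Omega_n)}{E}.
\]
Applying the Taylor estimate \eqref{eq:est} (after converting from $\ln$ to $\log_2$) and noting that $\log_2(\mu_{n,m}/E) = n$, I would obtain
\[
H_n(x) = n - \frac{1}{\ln 2}\cdot\frac{\mathbb{V}(\Omega_n)}{2E^2} + \frac{1}{\ln 2}\cdot\frac{\mathbb{E}((\Omega_n - E)^3)}{6E^3} + \epsilon\!\left(\frac{5}{3\ln 2}\cdot\frac{\mathbb{E}((\Omega_n - E)^4)}{E^4}\right).
\]

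The next step is to identify the leading $x$-dependence. Substituting the asymptotics of Theorem~\ref{thm:hws_moments} gives $\mathbb{V}(\Omega_n)/E^2 = \frac{(m!)^2\,\kappa^2(x)}{(2m-1)!\,n}\bigl(1 + O(1/n)\bigr)$, so the $1/n$ correction to $n$ in $H_n(x)$ is
\[
-\frac{(m!)^2\,\kappa^2(x)}{2\ln 2 \cdot (2m-1)! \cdot n}.
\]
By Theorem~\ref{theorem:maximal-variance} the coefficient $\kappa^2(x)$ is uniquely maximized among length-$m$ strings by $[0]^m$ and $[1]^m$, so this $\Theta(1/n)$ term is strictly more negative for the two constant strings than for any other $x$. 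The problem therefore reduces to showing that the remaining terms in the expansion are $o(1/n)$, so that they cannot cancel the $\kappa^2$ gap.

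This last step is where Lemma~\ref{lem:momconv} does the essential work and constitutes the main technical obstacle. From the convergence of normalized moments to those of a standard normal (third moment $0$, fourth moment $3$), one gets $\mathbb{E}((\Omega_n - E)^3) = o(\mathbb{V}(\Omega_n)^{3/2})$ and $\mathbb{E}((\Omega_n - E)^4) = O(\mathbb{V}(\Omega_n)^2)$, with hidden constants depending only on $m$ and $x$. Combining with $\mathbb{V}(\Omega_n) = \Theta(n^{2m-1})$ and $E = \Theta(n^m)$ from Theorem~\ref{thm:hws_moments}, I obtain
\[
\frac{\mathbb{E}((\Omega_n - E)^3)}{E^3} = o(n^{-3/2}), \qquad \frac{\mathbb{E}((\Omega_n - E)^4)}{E^4} = O(n^{-2}),
\]
both of which are $o(1/n)$. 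The delicate point is really just the bookkeeping here, together with the observation that since $\{0,1\}^m$ is finite, the convergence rates are automatically uniform over $x$.

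To conclude, set $\delta := \min\bigl\{\kappa^2([0]^m) - \kappa^2(x) : x \in \{0,1\}^m \setminus \{[0]^m,[1]^m\}\bigr\}$, which is strictly positive by Theorem~\ref{theorem:maximal-variance}. Choosing $N = N(m)$ large enough that the $o(1/n)$ error is strictly smaller than $\frac{(m!)^2\,\delta}{2\ln 2 \cdot (2m-1)! \cdot n}$ for all $n > N$ and every non-constant $x \in \{0,1\}^m$ gives $H_n(x) > H_n([0]^m)$ for every such $x$, while $H_n([1]^m) = H_n([0]^m)$ follows from the $0 \leftrightarrow 1$ symmetry of the channel. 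The overall argument is short given Theorems~\ref{thm:hws_moments}, \ref{theorem:maximal-variance} and Lemma~\ref{lem:momconv}; the only real work is the moment-order bookkeeping in the previous paragraph.
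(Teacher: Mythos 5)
Your proposal is correct and follows essentially the same route as the paper's own proof: rewrite $H_n(x)$ as $\log\mu_{n,m}$ minus a normalized $\mathbb{E}(\Omega_n\log\Omega_n)$, apply the Taylor estimate \eqref{eq:est}, isolate the $x$-dependence in the variance term via $\kappa^2(x)$ and Theorem \ref{theorem:maximal-variance}, and use Lemma \ref{lem:momconv} with $\mathbb{V}(\Omega_n)=\Theta(n^{2m-1})$, $E=\Theta(n^m)$ to show the third- and fourth-moment error terms are of lower order. Your version is in fact slightly cleaner in that it carries the $1/E$ normalization explicitly (working at the $\Theta(1/n)$ scale rather than $\Theta(n^{m-1})$) and makes the uniformity over the finite set $\{0,1\}^m$ and the gap $\delta$ explicit, but these are refinements of the same argument rather than a different one.
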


\begin{proof}
From Eq. \ref{eq:subseq-prob-distribution} and Eq. \ref{eq:shannon-entropy}, we have
\begin{align}\label{eq:omegaest}
\begin{split}
H_n(x) = H(P) &= -\sum_i p_i \log p_i \\
&= -\sum_i \left( \frac{\omega(i)}{\mu} \right) \log \left( \frac{\omega(i)}{\mu} \right) \\
&= \frac{\log\mu}{\mu}\sum_i \omega(i) - \frac{1}{\mu}\sum_i \omega(i)\log\omega(i) \\
&= \log\mu - \E(\Omega\log\Omega)
\end{split}   
\end{align}
Hence it suffices to prove that for sufficiently large n the constant strings maximize $\mathbb{E}\left(\Omega_n \log \Omega_n\right)$.

Note that $E$ depends only on $n$, and not on the form of $x$; by Theorem \ref{thm:hws_moments} we have $E=\Theta(n^m)$.  On the other hand, by the same Theorem we have $\mathbb{V}(\Omega_n) = \frac{2^{-2m}}{(2m-1)!}\kappa^2(x)n^{2m-1}\left(1+O(1/n)\right)$.

Now $\kappa^2$ depends only on the form of $x$ and not on $n$, and by Theorem \ref{theorem:maximal-variance}
it is uniquely maximized by the all-1s/0s strings.  Because $\kappa^2$ is independent of $n$, we therefore also have that the change in $\mathbb{V}(\Omega_n)$ induced by moving away from these strings is $\Theta(n^{2m-1})$, and so it suffices to prove that all of the error terms in \eqref{eq:est} are $o\left(\frac{n^{2m-1}}{E}\right) = o\left(n^{m-1}\right)$.

Now by Lemma \ref{lem:momconv} (combined with the fact that by Theorem \ref{thm:hws_moments} $\mathbb{V}(\Omega_n) = \Theta\left(n^{2m-1}\right)$), we have that $\E((\Omega-E)^3)=o(n^{3m-3/2})$, and $\E((\Omega-E)^4)=O(n^{4m-2})$.  Combining this with the fact that $E=\Theta(n^m)$ yields the required bounds on the errors, and hence the result.
\end{proof}

\subsubsection{Entropy Ordering based on Autocorrelation}\label{sec:numerical-experiments}

Although we have proved the extremal case of minimal entropy in the asymptotic limit for $n \rightarrow \infty$ and fixed output length $m$ via the autocorrelation coefficient $\kappa^2(x)$, it is worth pointing out that our numerical results indicate that $\kappa^2(x)$ predicts the entropy ordering perfectly in the finite length domain as well, i.e., for small and comparable fixed values of $n$ and $m$. An example obtained from empirical data is presented in Table \ref{tab:kappa-entropy} to illustrate the correlation between $H_n(x)$ and $\kappa^2(x)$ for $n=8$ and $m=5$.
\begin{table}[ht!]
	\caption{$x$ strings sorted by $\kappa^2$ in descending order predicts the entropy ordering}
	\label{tab:kappa-entropy}
	\centering
	\begin{tabular}{l*{6}{c}r}
		$x$ & $\kappa^2(x)$ $\downarrow$ & $H(x)$ \\
		\hline
		$11111$ & 630 & 5.4649 \\
		\hline
		$00000$ & 630 & 5.4649 \\
		\hline
		$00001$ & 518 & 5.7581 \\
		$\ldots$ & $\ldots$ & $\ldots$ \\
		\hline
		$11000$ & 486 & 5.8838 \\
		$\ldots$ & $\ldots$ & $\ldots$ \\
		$00010$ & 458 & 6.0132 \\
		$\ldots$ & $\ldots$ & $\ldots$ \\
		$10011$ & 398 & 6.1076 \\
		$\ldots$ & $\ldots$ & $\ldots$ \\
		$01101$ & 366 & 6.2375 \\
		$\ldots$ & $\ldots$ & $\ldots$ \\
		$01010$ & 350 & 6.3498  \\
		\hline
	\end{tabular}
\end{table}

\section{Concluding Remarks}\label{sec:conclusions}

We have provided a proof for the minimization of entropy by the uniform string in the asymptotic limit, i.e., $n\rightarrow\infty$ and fixed output length $m$, using results from hidden word statistics. However, showing the entropy maximization by the alternating string remains an open problem given that a proof establishing the minimization of the autocorrelation coefficient $\kappa^2(\texttt{1010\ldots})$ still escapes us. Beyond establishing this maximization, proving the entropy ordering of $x$ strings determined by $\kappa^2(x)$ for finite $n$ and $m$ represents another open problem.

\bibliographystyle{unsrt}
\bibliography{hws_entropy}

\end{document}